\newcommand{\lv}{\left \vert}
\newcommand{\rv}{\right \vert}
\newcommand{\la}{\left \langle}
\newcommand{\ra}{\right \rangle}
\newcommand{\ket}[1]{\lv #1 \ra}
\newcommand{\bra}[1]{\la #1 \rv}
\newcommand{\ie}{\textit{i.e.}\xspace}
\newcommand{\A}{\mathrm{A}}
\newcommand{\B}{\mathrm{B}}
\newtheorem*{theo}{\textit{Theorem}}
\newtheorem{defn}{\textit{Definition} }
\newtheorem{lemm}{\textit{Lemma}}
\title{Classification of delocalization power of global unitary operations in terms of LOCC one-piece relocalization}
\author{Akihito Soeda
\institute{Graduate School of Science\\
The University of Tokyo\\
Tokyo, Japan}
\email{soeda@eve.phys.s.u-tokyo.ac.jp}
\and
Mio Murao
\institute{Graduate School of Science\\
The University of Tokyo\\
Tokyo, Japan}
\institute{Institute for Nano Quantum Information Electronics\\
The University of Tokyo\\
Tokyo, Japan}
\email{murao@phys.s.u-tokyo.ac.jp}
}
\begin{document}
\maketitle

\begin{abstract}
We study how two pieces of localized quantum information can be delocalized across a composite Hilbert space when a global unitary operation is applied.  We classify the delocalization power of global unitary operations on quantum information by investigating the possibility of relocalizing one piece of the quantum information without using any global quantum resource.  We show that one-piece relocalization is possible if and only if the global unitary operation is a local unitary equivalent of a controlled-unitary operation.   The delocalization power turns out to reveal a different aspect of the non-local properties of global unitary operations characterized by their entangling power. 
\end{abstract}

\section{Introduction}

In classical computation, whether the input is known or not does not change the computational power, because classical states can always be perfectly distinguished by measurement, hence classical computation involving unknown input states is equivalent to classical computation with known input states.  Quantum computation may also start from a known state as the input, where in such case the state of the quantum computer remains in a known state at every point in the computation.  Although we may require superpolynomial resources, the state at each step can be still classically calculated in principle.  In this sense, any quantum computation starting from a known state still lies within the realm of classical computation, and the input can be interpreted as classical information, even though the input is represented by a quantum state.

On the other hand, quantum computation involving an unknown quantum state as the input is fundamentally different from classical computation.  It is reasonable to expect that new kinds of computational tasks can be found from this type of quantum computation, making it worthy of investigation.  In this paper, we identify the unknown quantum state as quantum information. 
As a first step, we begin by analyzing the effect of quantum operations on quantum information.

Global unitary operations are one of the most fundamental and important quantum operations and will be the subject of our investigation.  We introduce the concept of ``localized'' quantum information and consider the case when a global operation is applied to two ``pieces'' of localized quantum information.  The application of the global unitary operation causes the localized pieces of quantum information to be ``delocalized''.  We characterize how ``powerful'' this delocalization effect is and classify the global unitary operations according to their delocalization power on the quantum information.

We note several related works.  Although it has been presented in different terminology, Gregoratti and Werner \cite{GW} have studied the delocalization and relocalization of quantum information.  Their work is also based on a two-body system, but one of the susbsystems was chosen as the ``environment'', fixed to a known pure state before the delocalization.  Their relocalization procedure is restricted so that it starts with one measurement on the environment followed by one quantum operation on the other subsystem.  In our setting, the measurements can be performed on the both subsystems as many times as needed.  Ogata and Murao \cite{OM} have also characterized delocalization of quantum information of two \textit{qubits}.  They have derived a necessary and sufficient condition for the quantum information to be relocalizable without global quantum operations, but they assumed a partial knowledge of the initial state of one of the qubits.

The rest of the paper is organized as follows.  We provide the preliminaries in Section \ref{prem} and introduce our notion of a piece of quantum information along with the concepts of delocalization and relocalization of quantum information.  In Section \ref{charac}, we prove a theorem, which is the main result of this paper and crucial for characterizing the delocalization power of global unitary operations.  We conclude and discuss our result in Section \ref{conc}.

\section{Preliminaries} \label{prem}
\subsection{Delocalization and relocalization of quantum information}

A qudit is a $d$-dimensional quantum system.  Suppose that the state of the qudit is known to be pure, hence described by a $d$-dimensional vector $\ket{\psi} \in \mathcal{H} = \mathbb{C}^d$, but we do not know the vector itself.   In other words, choosing $\{ \ket{i} \}_{i=0}^{d-1}$ as the basis of $\mathbb{C}^d$, $\ket{\psi}$ can be expressed as $\ket{\psi} = \sum_i \alpha_i \ket{i}$, where the coefficients are normalized by $\sum_i |\alpha_i|^2 = 1$, but the precise values of the coefficients $\alpha_i$ are {\it unknown}.  Any attempt to identify the input state will lead to the destruction of the state, due to the uncertainty principle of quantum mechanics, which implies that the input state is not perfectly distinguishable and is eligible to be called as quantum information.  In such a scenario, we say that a {\it piece} of qudit quantum information is stored in the minimal Hilbert space $\mathcal{H}$, or \textit{localized}.

Now we consider two pieces of localized quantum information represented by a tensor product of two unknown qudit states $\ket{\psi_A} \otimes \ket{\psi_B} \in \mathcal{H_\A} \otimes \mathcal{H_\B} =\mathbb{C}^d \otimes \mathbb{C}^d$.  Conventionally, the first Hilbert space $\mathcal{H_\A}$ is called Alice's Hilbert space and the second one $\mathcal{H_\B}$ is called Bob's Hilbert space, and we regard that $\mathcal{H_\A}$ is held by Alice and $\mathcal{H_\B}$ is by Bob.  A global unitary operation $U$ acts on $\mathcal{H_\A} \otimes \mathcal{H_\B}$ and we do not call $U$ to be a global unitary operation in this paper if it can be written by a tensor product of local unitary operations $u_A \otimes u_B$. 

When we apply a global unitary operation $U$ on two pieces of localized qudit quantum information, each quantum information can be no longer described by the Hilbert space of the original qudit.  We say that the two pieces of quantum information are \textit{delocalized}.  After the delocalization, if the inverse of the global unitary operation is applied, then each piece of quantum information is \textit{relocalized}, \ie the state of the each qudit is restored to the original state before the application of $U$.  Such perfect and simultaneous relocalization of the both pieces of quantum information requires a global quantum operation.  However, we can consider the situation where only one piece of quantum information is required to be relocalized.  In this situation, global quantum operations are not necessary.  We analyze for which global unitary operation, this {\it one-piece relocalization} on two pieces of delocalized quantum information is possible. 

\subsection{LOCC and accumulated operators}

Without any global quantum operation, realizable quantum operations are restricted to local quantum operations and classical communication (LOCC)~\cite{Horodecki-LOCC definition}.  For a qudit, the most general form of local quantum operation is given by a generalized measurement, which is represented by a set of operators $\{ M^{(r)}\}_r$ on $\mathcal{H}$ such that satisfies the completeness relation
\begin{equation}
 \sum_r M^{(r)\dag}M^{(r)} = \mathbb{I},
\end{equation}
where $r$ is an index of outcome, and $\mathbb{I}$ is the identity operator on $\mathcal{H}$. We note that in this definition, the generalized measurements include unitary operations as a special case where only one outcome exists. 

Alice and Bob perform such generalized measurement operations on their respective qudits.  Without loss of generality, we may assume that Alice and Bob take turns in applying the local quantum operation.  At every turn, one of the parties performs a measurement operation on his or her qudit and sends the outcome of the measurement to the other party using classical communication.  Upon receiving the communication, the other party chooses a measurement to perform, and sends the new outcome to the former party.  The outcome of the $k$-th measurement will be denoted by $r_k$.  Notice that each measurement operation is chosen according to all the measurement outcomes obtained up to the $k$-th turn, \ie outcomes from the first turn to the $(k-1)$-th turn.  We denote the sequence of all outcomes up to the $k$-th turn by $R_k = (r_1, \cdots, r_k)$. 

If the $k$-th turn is Alice's, then her measurement operation at this turn is a function of $R_{k-1}$ with the outcome being labeled with $r_k$.  All this information is represented by using superscripts, \ie $\{ \mathcal{M}^{(r_k|R_{k-1})} \}_{r_k}$.  The superscripts will be used for Bob's measurement operations as well.  We will reserve $\mathcal{M}$ for Alice's measurement operations and $\mathcal{K}$ for Bob's.

Our analysis on the delocalization power begins by understanding the effects of the measurement operations in an LOCC protocol on the delocalized quantum information.  Particularly, we study the accumulated effect of the measurement operations by defining a special operator to represent the accumulated effect.
\begin{defn} 
Accumulated operators: Given a measurement sequence of $k$-turns, $R_k$, Alice's accumulated operator $M^{(R_k)}$ is defined as a product of all the measurement operators corresponding to each outcome, \ie
 \begin{equation}
   M^{(R_k)} = \mathcal{M}^{(r_k|R_{k-1})} \mathcal{M}^{(r_{k-1}|R_{k-2})} \cdots \mathcal{M}^{(r_1)}.
 \end{equation}
It is understood that if the $i$-th turn ($1 \leq i \leq k$) is Bob's turn, then the measurement operator on Alice's qudit for this turn is given by the identity operator, \ie $\mathcal{M}^{(r_i|R_{i-1})} = \mathbb{I}$.  Bob's accumulated operator is defined in a similar manner and denoted by $K^{(R_k)}$.
\end{defn}

Note that due to the completeness relation of the measurement operations, the accumulated operators satisfy 
\begin{equation}
 \sum_{r_k} \mathcal{M}^{(R_k)\dag} \mathcal{M}^{(R_k)} = \mathcal{M}^{(R_{k-1})} \mathcal{M}^{(R_{k-1})} \label{CR_A}
\end{equation}
 and
\begin{equation}
 \sum_{r_k} \mathcal{K}^{(R_k)\dag} \mathcal{K}^{(R_k)} = \mathcal{K}^{(R_{k-1})} \mathcal{K}^{(R_{k-1})} \label{CR_B}
\end{equation}
for all $k$.  The completely positive and trace preserving (CPTP) map $\Lambda^{\mathrm{L}}$ for an $N$-turn LOCC can be described by using the accumulated operators,
\begin{equation}
 \Lambda^{\mathrm{L}}(\rho_{\mathrm{AB}}) = \sum_{R_N} ({M}^{(R_N)} \otimes {K}^{(R_N)}) \rho_{\mathrm{AB}} ({M}^{(R_N)} \otimes {K}^{(R_N)})^\dag.
\end{equation}

Now we formally define LOCC one-piece relocalization.
\begin{defn} 
LOCC one-piece relocalization: An LOCC protocol is called LOCC one-piece relocalization of quantum information for the global operation $U$, if the CPTP map $\Lambda^{\mathrm{L}}_U$ describing the protocol satisfies one of the following properties for all $\ket{\psi_\A} \in \mathcal{H}_\A$ and $\ket{\psi_\B} \in \mathcal{H}_\B$:
 \begin{enumerate}
  \item There exists a density matrix $\sigma$ on $\mathcal{H}_\B$ such that
   \begin{equation}
    \Lambda^{\mathrm{L}}_U(U \ket{\psi_\A}\bra{\psi_\A} \otimes \ket{\psi_\B}\bra{\psi_\B} U^\dag) = \ket{\psi_\A}\bra{\psi_\A} \otimes \sigma.
   \end{equation}
  \item There exists a density matrix $\tau$ on $\mathcal{H}_\A$ such that
   \begin{equation}
    \Lambda^{\mathrm{L}}_U(U \ket{\psi_\A}\bra{\psi_\A} \otimes \ket{\psi_\B}\bra{\psi_\B} U^\dag) = \tau \otimes \ket{\psi_\B}\bra{\psi_\B}.
   \end{equation}
  \end{enumerate}
\end{defn}

\section{Characterizing delocalization power of global unitary operations} \label{charac}

Global unitary operations can be divided into two classes according to whether there exists an LOCC one-piece relocalizing protocol for delocalized quantum information.  When there is no such LOCC protocol for a particular global unitary operation, then we understand that the delocalization power of the global unitary operation on the quantum information is too strong for LOCC to relocalize, even just for one piece of quantum information.  By determining the existence of an LOCC one-piece relocalizing protocol, we provide a characterization of the strength of the delocalization power of global unitary operations.

Our main goal is to identify all the global unitary operations that allow LOCC one-piece relocalization.  A particular kind of global unitary operations called local unitary equivalent of a controlled-unitary operation is important for our consideration.
\begin{defn}
Local unitary equivalent of a controlled-unitary operation:
 A global unitary operation on $\mathcal{H}_\A \otimes \mathcal{H}_\B$ is called a local unitary equivalent of a controlled-unitary operation if $U$ can be expressed as
 \begin{equation}
  U =  (\sum_i P^{(i)} \otimes v_\B^{(i)}) (u_\A \otimes \mathbb{I}), \label{c-u}
 \end{equation}
 where $P^{(i)}$ are mutually orthogonal projection operators on $\mathcal{H}_\A$, $u_\A$ is a unitary operator on $\mathcal{H}_\A$, and $v_\B^{(i)}$ are unitary operators on $\mathcal{H}_\B$.
\end{defn}

For classification, we must identify all the global unitary operations that do not have any LOCC one-piece relocalizing protocol.  We need to prove that one-piece relocalization fails under all possible LOCC for the given global unitary operation.  To the best of the authors' knowledge, there is no known method to perform a brute force search through the set of all possible LOCC.  We solve this obstacle by proving the following theorem.
\begin{theo}
 LOCC one-piece relocalization is possible if and only if the global unitary operation is a local unitary equivalent of a controlled-unitary operation.
\end{theo}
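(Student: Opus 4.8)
The plan is to prove the two directions separately; the ``if'' direction is a short explicit construction, while the ``only if'' direction carries all the difficulty. For sufficiency, suppose $U=(\sum_i P^{(i)}\otimes v_\B^{(i)})(u_\A\otimes\mathbb{I})$. Since $U$ is unitary and the $P^{(i)}$ are mutually orthogonal they must sum to $\mathbb{I}_\A$, so $\{P^{(i)}\}$ is a legitimate projective measurement on Alice. I would have Alice measure $\{P^{(i)}\}$ on the delocalized state $U(\ket{\psi_\A}\otimes\ket{\psi_\B})=\sum_i (P^{(i)}u_\A\ket{\psi_\A})\otimes(v_\B^{(i)}\ket{\psi_\B})$, classically send the outcome $i$ to Bob, and have Bob apply $v_\B^{(i)\dag}$. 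Conditioned on each $i$, Bob's state is restored exactly to $\ket{\psi_\B}$ while Alice keeps a subnormalized pure vector; averaging over $i$ gives $\tau\otimes\ket{\psi_\B}\bra{\psi_\B}$, which is Property~2 of the definition. This is a one-round LOCC, so one-piece relocalization is possible.

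For necessity, assume some LOCC protocol relocalizes (say) Bob's piece, and abbreviate the accumulated operators of an outcome by $M_\mu$ on $\mathcal{H}_\A$ and $K_\mu$ on $\mathcal{H}_\B$. The first key step is that the output $\tau\otimes\ket{\psi_\B}\bra{\psi_\B}$ has a \emph{pure} Bob-marginal, so its support lies in $\mathcal{H}_\A\otimes\mathbb{C}\ket{\psi_\B}$; hence every Kraus vector $(M_\mu\otimes K_\mu)U(\ket{\psi_\A}\otimes\ket{\psi_\B})$ must equal $\ket{\chi_\mu}\otimes\ket{\psi_\B}$. Decomposing this in a fixed basis of $\mathcal{H}_\B$ (testing $\ket{\psi_\B}=\ket{m}$ and $\ket{\psi_\B}=\ket{m}+\ket{n}$) kills all off-diagonal Bob-blocks and equates the diagonal ones, yielding the rigid identity $(M_\mu\otimes K_\mu)U=T_\mu\otimes\mathbb{I}_\B$ for some operator $T_\mu$ on $\mathcal{H}_\A$, valid for every outcome.

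The hard part is converting this family of per-outcome product identities into a single global decomposition. Comparing the ranges of the two sides of $(M_\mu\otimes K_\mu)U=T_\mu\otimes\mathbb{I}$ (using that $U$ is invertible) shows that $K_\mu$ is invertible whenever $T_\mu\neq0$, so $(M_\mu\otimes\mathbb{I})U=T_\mu\otimes K_\mu^{-1}$. Evaluating $[(M_\mu\otimes\mathbb{I})U][(M_\mu\otimes\mathbb{I})U]^\dag=M_\mu M_\mu^\dag\otimes\mathbb{I}$ in two ways and invoking uniqueness of tensor factorizations forces $K_\mu^\dag K_\mu\propto\mathbb{I}$, \ie $K_\mu=c_\mu W_\mu$ with $W_\mu$ unitary; the analogous two-outcome computation of $(M_\mu\otimes\mathbb{I})UU^\dag(M_\nu^\dag\otimes\mathbb{I})$ gives $W_\mu^\dag W_\nu\propto\mathbb{I}$ whenever $M_\mu M_\nu^\dag\neq0$, and $T_\mu T_\nu^\dag=0=M_\mu M_\nu^\dag$ otherwise. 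Consequently the $W_\mu$ organize into classes, indexed by $i$, on which they coincide up to phase, defining unitaries $v_\B^{(i)}$ on $\mathcal{H}_\B$. Writing $\tilde M_\mu:=c_\mu M_\mu$, the CPTP completeness relation $\sum_\mu(M_\mu\otimes K_\mu)^\dag(M_\mu\otimes K_\mu)=\mathbb{I}$ collapses to $\sum_\mu\tilde M_\mu^\dag\tilde M_\mu=\mathbb{I}_\A$, and left-multiplying the per-outcome identities $(\tilde M_\mu\otimes\mathbb{I})U=T_\mu\otimes W_\mu^\dag$ by $\tilde M_\mu^\dag\otimes\mathbb{I}$ and summing produces the closed form $U=\sum_\mu(\tilde M_\mu^\dag T_\mu)\otimes W_\mu^\dag=\sum_i D^{(i)}\otimes v_\B^{(i)\dag}$ with $D^{(i)}=\sum_{\mu\in i}\tilde M_\mu^\dag T_\mu$.

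Finally, the cross-class vanishing $D^{(i)}D^{(j)\dag}=0$ for $i\neq j$ (inherited from $T_\mu T_\nu^\dag=0$ across classes) makes $D:=\sum_i D^{(i)}$ a co-isometry, hence a unitary $u_\A$ in finite dimension, and forces $P^{(i)}:=D^{(i)}D^\dag$ to be mutually orthogonal projections summing to $\mathbb{I}_\A$ (using $D^{(i)}=P^{(i)}D$ and $DD^\dag=\mathbb{I}$ to get $P^{(i)}P^{(j)}=D^{(i)}D^{(j)\dag}=\delta_{ij}P^{(i)}$); this rewrites $U=(\sum_i P^{(i)}\otimes v_\B^{(i)\dag})(u_\A\otimes\mathbb{I})$, exactly the controlled-unitary form. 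I expect the genuine obstacle to be this structural extraction --- in particular proving that each $K_\mu$ is proportional to a unitary and that the Bob-side unitaries fall into finitely many classes whose Alice-side weights assemble into true orthogonal projectors --- while correctly handling the outcomes with $T_\mu=0$ and bookkeeping the scalars and phases throughout.
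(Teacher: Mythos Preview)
Your argument is correct and follows a genuinely different route from the paper's.

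For the forward implication, the paper proceeds through three preparatory lemmas. First it reduces any LOCC one-piece relocalization to a \emph{one-way} LOCC (Alice measures, Bob applies a unitary), by feeding in the maximally mixed state and observing that Bob's accumulated operator must preserve maximal entanglement with an ancilla, hence be proportional to a unitary. Second, it passes from Alice's measurement operators $M_U^{(r)}$ to the projectors $P_U^{(r)}$ onto their supports and derives the key relation $(P_U^{(r)}\otimes\mathbb{I})U=(P_U^{(r)}v_U^{(r)})\otimes u_U^{(r)\dag}$. Third, when $\{P_U^{(r)}\}$ happens to be projective it reads off the controlled-unitary form directly. The general case is then handled by an iterative merging procedure: whenever $P_U^{(r)}P_U^{(s)}\neq 0$ the Bob-side unitaries coincide up to phase, so the two support projectors are replaced by the projector onto their sum-space; repeating until orthogonality reduces to the projective case.

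You bypass the one-way reduction entirely and never introduce support projectors. From the purity of Bob's marginal you extract $(M_\mu\otimes K_\mu)U=T_\mu\otimes\mathbb{I}$ directly for every accumulated Kraus pair, then use the range comparison and the identity $M_\mu M_\nu^\dag\otimes\mathbb{I}=T_\mu T_\nu^\dag\otimes K_\mu^{-1}(K_\nu^{-1})^\dag$ to show simultaneously that each $K_\mu$ is a scalar multiple of a unitary $W_\mu$ and that the $W_\mu$ fall into phase-equivalence classes with $T_\mu T_\nu^\dag=0$ across classes. Summing $\tilde M_\mu^\dag$ against the per-outcome identities, together with the CPTP completeness relation, assembles $U=\sum_i D^{(i)}\otimes v_\B^{(i)\dag}$ in one stroke, and the cross-class orthogonality $D^{(i)}D^{(j)\dag}=0$ plus $\sum_i D^{(i)}D^{(i)\dag}=\mathbb{I}$ (from $UU^\dag=\mathbb{I}$) yields the projectors $P^{(i)}=D^{(i)}D^\dag$ and the local unitary $u_\A=D$.

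What each approach buys: the paper's route isolates the independently interesting structural fact that one-way LOCC already suffices for one-piece relocalization, and its support-projector viewpoint makes the final controlled-unitary form transparent. Your route is more economical---no separate reduction lemma, no iterative merging---and shows that the controlled-unitary decomposition can be read off algebraically from the Kraus identities and completeness alone. The place you flagged as the genuine obstacle (extracting orthogonal projectors from the $D^{(i)}$) does go through, since $D^{(i)}D^{(j)\dag}=0$ for $i\neq j$ forces the ranges of the $D^{(i)\dag}$ to be mutually orthogonal and $D=\sum_i D^{(i)}$ to be unitary, whence $P^{(i)}=D^{(i)}D^\dag$ are Hermitian idempotents summing to $\mathbb{I}$.
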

Therefore, the local unitary equivalents of controlled-unitary operations are classified as global unitary operations that are LOCC one-piece relocalizable, whereas those global unitary operations which are not local unitary equivalents of controlled-unitary operations are classified as the ones that cannot be LOCC one-piece relocalizable.

The backward implication of Theorem can be proved by a simple construction.
\begin{proof} [Proof. Backward implication of Theorem]
Because the global unitary operation $U$ is a local unitary equivalent of a controlled-unitary operation, $U$ has the form of Equation (\ref{c-u}).  Alice performs a measurement operation given by $\{ P^{(i)} \}_i$ and sends the outcome to Bob, with which he chooses $v_\B^{(i)\dag}$ and applies it on his qudit.
\end{proof}

Before proving the forward implication, we first need to establish three new lemmas.  In certain LOCC porotocols, only one of the parties is required to send classical information, which are called one-way LOCC. 
\begin{defn} 
 One-way LOCC: LOCC is called one-way LOCC from Alice to Bob, if its CPTP map $\Lambda^{\mathrm{1L}}$ can be written as
 \begin{equation}
   \Lambda^{\mathrm{1L}}(\rho) = \sum_{r,r'} (M^{(r)} \otimes K^{(r'|r)}) \rho (M^{(r)} \otimes K^{(r'|r)})^\dag,
 \end{equation}
 where $\{ M^{(r)} \}_r$ represents the measurement operation by Alice and $\{ K^{(r'|r)} \}_{r'}$ the measurement operations by Bob conditioned on Alice's outcome $r$.
\end{defn}

\begin{lemm} \label{toOneWay}
LOCC one-piece relocalization is possible if and only if it is possible by using a one-way LOCC.
\end{lemm}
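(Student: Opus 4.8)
The plan is to prove the two implications separately, with the backward one essentially free. Since every one-way LOCC is in particular an LOCC, if one-piece relocalization is achievable by a one-way protocol then it is achievable by LOCC; this direction requires no work. The content lies in the forward implication: given an arbitrary (possibly many-round, two-way) LOCC protocol $\Lambda^{\mathrm{L}}_U$ achieving one-piece relocalization, I would produce a one-way protocol achieving the same. Without loss of generality assume the relocalized piece is Bob's, so that $\Lambda^{\mathrm{L}}_U(U\ket{\psi_\A}\bra{\psi_\A}\otimes\ket{\psi_\B}\bra{\psi_\B}U^\dag)=\tau\otimes\ket{\psi_\B}\bra{\psi_\B}$ for all $\ket{\psi_\A},\ket{\psi_\B}$; the other case is symmetric, and the resulting protocol will be one-way from Alice to Bob.

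The engine of the proof is a rigidity statement for the individual measurement branches. Writing $\ket{\Psi}=U(\ket{\psi_\A}\otimes\ket{\psi_\B})$ and $V^{(R_N)}=M^{(R_N)}\otimes K^{(R_N)}$, the output is the ensemble $\sum_{R_N}V^{(R_N)}\ket{\Psi}\bra{\Psi}V^{(R_N)\dag}=\tau\otimes\ket{\psi_\B}\bra{\psi_\B}$. First I would observe that each branch must already be correct: tracing out $\mathcal{H}_\A$ exhibits a convex decomposition of the \emph{pure} state $\ket{\psi_\B}\bra{\psi_\B}$, and since a pure state is extremal, every branch vector $V^{(R_N)}\ket{\Psi}$ is a pure bipartite state with marginal $\ket{\psi_\B}\bra{\psi_\B}$ on $\mathcal{H}_\B$, hence a product $\ket{\chi_{R_N}}\otimes\ket{\psi_\B}$. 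Next I would strip the $\psi_\B$-dependence of $\ket{\chi_{R_N}}$ by linearity: for fixed $\psi_\A$ the vector $V^{(R_N)}U(\ket{\psi_\A}\otimes\ket{\psi_\B})$ is linear in $\ket{\psi_\B}$, so comparing the images of two linearly independent choices $\ket{\psi_\B},\ket{\psi_\B'}$ and of their sum forces $\ket{\chi_{R_N}}$ to be independent of $\ket{\psi_\B}$ and linear in $\ket{\psi_\A}$. This yields the key factorization $(M^{(R_N)}\otimes K^{(R_N)})\,U=A_{R_N}\otimes\mathbb{I}$ for a family of operators $A_{R_N}$ on $\mathcal{H}_\A$, and feeding this into the joint completeness relation $\sum_{R_N}V^{(R_N)\dag}V^{(R_N)}=\mathbb{I}$ gives $\sum_{R_N}A_{R_N}^\dag A_{R_N}=\mathbb{I}$, so that $\{A_{R_N}\}$ is itself a legitimate measurement on Alice's qudit.

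With the factorization in hand I would build the one-way protocol. It says that, branch by branch, Bob's accumulated operator together with $U$ leaves his register in the input state $\ket{\psi_\B}$ and feeds back nothing Alice needs, the entire effect on the relocalized side being the $\psi_\B$-independent map $A_{R_N}$. The target is therefore to rearrange the protocol into ``all of Alice's turns, then all of Bob's turns'', which is exactly a one-way protocol from Alice to Bob once Alice's adaptive turns are collapsed into a single measurement $\{M^{(a)}\}$ and Bob's into a single conditional family $\{K^{(b\vert a)}\}$. The reordering is justified because operators on $\mathcal{H}_\A$ and $\mathcal{H}_\B$ commute, so the only thing that could prevent Bob's turns from being deferred to the end is Alice conditioning on Bob's outcomes; I would use the factorization to argue that this conditioning is inessential, since neither $A_{R_N}$ nor the restored state $\ket{\psi_\B}$ depends on how the two parties' turns are interleaved.

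The main obstacle is precisely this last step: converting the static branch-wise factorization into a genuine one-way \emph{protocol}, that is, showing rigorously that Alice's measurements can be chosen without reference to Bob's outcomes. This is delicate because adaptive two-way LOCC is strictly more powerful than one-way LOCC for generic tasks, so the reduction must genuinely exploit relocalization rather than merely the bipartite product form of the branch operators, which holds for \emph{any} LOCC. I expect the cleanest route is to eliminate the Bob-to-Alice messages one at a time, at each stage using $(M^{(R_N)}\otimes K^{(R_N)})U=A_{R_N}\otimes\mathbb{I}$ to show that deferring the corresponding Bob measurement past the following Alice measurement changes neither the outcome statistics nor the relocalized state, and then appealing to the completeness relations of Equations (\ref{CR_A}) and (\ref{CR_B}) to confirm that the rearranged operations still constitute valid measurements.
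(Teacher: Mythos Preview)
Your branch-wise analysis is correct and in fact reaches a stronger conclusion than the paper does at the corresponding stage: you obtain the full factorization $(M^{(R_N)}\otimes K^{(R_N)})U=A_{R_N}\otimes\mathbb{I}$, whereas the paper, arguing instead via maximally entangled ancillas and the identity input $\mathbb{I}/d\otimes\mathbb{I}/d$, extracts only that $K^{(R_N)}K^{(R_N)\dag}\propto\mathbb{I}$, i.e.\ that Bob's accumulated operator is a scalar multiple of a unitary. Your purity/extremality-plus-linearity derivation is a genuinely different and arguably cleaner route to this point.

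Where your proposal has a gap is exactly the step you flag as the ``main obstacle''. Your proposed inductive elimination of Bob-to-Alice messages is vague, and you never argue why the factorization makes Alice's conditioning on Bob's outcomes inessential. The missing observation is that your own factorization already gives the paper's key fact in one line: multiplying $(M^{(R_N)}\otimes K^{(R_N)})U=A_{R_N}\otimes\mathbb{I}$ by its adjoint on the right and using $UU^\dag=\mathbb{I}$ yields
\[
M^{(R_N)}M^{(R_N)\dag}\otimes K^{(R_N)}K^{(R_N)\dag}=A_{R_N}A_{R_N}^{\dag}\otimes\mathbb{I},
\]
hence $K^{(R_N)}K^{(R_N)\dag}\propto\mathbb{I}$ and $K^{(R_N)}$ is proportional to a unitary $u_\B^{(R_N)}$. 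With that in hand the one-way construction is immediate and concrete (this is precisely how the paper does it): Alice locally generates arbitrary ``guesses'' for all of Bob's outcomes, runs her adaptive measurements using those guesses in place of Bob's messages, and sends the full transcript $R_N$ to Bob; Bob then applies $u_\B^{(R_N)}$. For each fixed sequence of guesses the operators $\{M^{(R_N)}\}$ satisfy completeness on $\mathcal{H}_\A$ by the relations \eqref{CR_A}, so this is a valid one-way protocol, and by your factorization the post-protocol state on $\mathcal{H}_\B$ is $\ket{\psi_\B}$ in every branch. Add this one-line consequence of your factorization and the paper's ``guessing'' construction, and your argument is complete.
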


\begin{proof} The backward implication is trivial as one-way LOCC is a subset of general LOCC.  The forward implication is proved by constructing the one-way LOCC.

Assume an $N$-turn LOCC protocol is used for LOCC one-piece relocalization.  Let $\Lambda^{\mathrm{L}}_U$ be the CPTP map of the LOCC protocol and suppose that Bob's piece of quantum information is relocalized.  The identity operator $\mathbb{I}$ on a $d$-dimensional Hilbert space can be expressed as the sum of $d$ rank 1 mutually orthogonal projectors, \ie $\mathbb{I} = \sum_{i=0}^{d-1} \ket{i}\bra{i}$.

Suppose each of Alice and Bob has an extra qudit as ancilla.  Alice prepares her input qudit and her ancilla in a maximally entangled state, while Bob does the same.  The reduced state of the input qudits is given by $\mathbb{I}/d \otimes \mathbb{I}/d$.  Alice and Bob perform the global unitary operation $U$ and $\Lambda^{\mathrm{L}}_U$.  Invoking the linearity of CPTP maps, we obtain
\begin{eqnarray}
  \Lambda^{\mathrm{L}}_U(U (\mathbb{I}/d \otimes \mathbb{I}/d) U^\dag) &=& \sum_{i,j} \Lambda^{\mathrm{L}}_U(U (\ket{i}\bra{i}/d \otimes \ket{j}\bra{j}/d) U^\dag)\\
  &=& \sum_{i,j} \tau/d \otimes \ket{j}\bra{j}/d\\
  &=& \tau \otimes \mathbb{I}/d. \label{jki}
\end{eqnarray}
Because the identity operator commutes with any operator, we have
\begin{eqnarray}
 \Lambda^{\mathrm{L}}_U(U (\mathbb{I}/d \otimes \mathbb{I}/d) U^\dag) &=& \Lambda^{\mathrm{L}}_U(\mathbb{I}/d \otimes \mathbb{I}/d)\\
 &=& \sum_{R_N} (M^{(R_N)} \otimes K^{(R_N)}) (\mathbb{I}/d \otimes \mathbb{I}/d)  (M^{(R_N)} \otimes K^{(R_N)})^\dag\\
 &=& \sum_{R_N} M^{(R_N)}M^{(R_N)\dag} \otimes K^{(R_N)}K^{(R_N)\dag}. \label{iik}
\end{eqnarray}
Combining Equations (\ref{jki}) and (\ref{iik}), we conclude that
\begin{equation}
 \tau \otimes \mathbb{I}/d = \sum_{R_N} M^{(R_N)}M^{(R_N)\dag} \otimes K^{(R_N)}K^{(R_N)\dag}.
\end{equation}
Moreover, Bob's qudit should be maximally entangled to his ancilla after the LOCC protocol, which implies that
\begin{equation}
 K^{(R_N)}K^{(R_N)\dag} \propto \mathbb{I}.
\end{equation}
This equation is satisfied only if $K^{(R_N)} \propto u_\B^{(R_N)}$.

The one-way LOCC protocol for LOCC one-piece relocalization is constructed as follows.  First, instead of waiting for Bob to actually perform his measurements and having him send his measurement outcomes, Alice randomly ``guesses'' all of Bob's outcomes and finishes applying all her measurement operations.  Then, Alice sends all of her guesses and her outcomes to Bob by classical communication at once.  Using the message, Bob applies the appropriate unitary operation $u_\B^{(R_N)}$, which relocalizes his piece of quantum information.   Note that our argument here does not depend on the number of the turns in the LOCC one-piece relocalization.  In addition, the same argument holds when Alice's piece of quantum information is relocalized, which proves that if LOCC one-piece relocalization is possible, the it can be done so with one-way LOCC.
\end{proof}

The proof also shows that LOCC one-piece relocalization is always possible by a measurement followed by a unitary operation.  Because there is only one measurement in the protocol, the accumulated operator is given by an operator with a single variable for the one outcome, \ie
\begin{equation}
 M^{(r)} \otimes u_B^{(r)}. \label{one_index}
\end{equation}
Strictly speaking, this property holds only if Bob's piece of quantum information is relocalized.  Because the following analysis can be easily adapted to the other case when Alice's piece of quantum information is relocalized, from here on, we assume that the accumulated operators of LOCC one-piece relocalization have the form (\ref{one_index}).

LOCC one-piece relocalizability enforces a particular requirement on the action of $M^{(r)}$, which we formally explain in the next lemma.
\begin{lemm}\label{lemm.supp} 
 Let $\{ M^{(r)}_U \}$ be the measurement operation used in LOCC one-piece relocalization for the global unitary operation $U$.  Let $P^{(r)}_U$ be the projection operator on the support of $|M^{(r)}_U| = \sqrt{M^{(r)\dag}_U M^{(r)}_U}$.  For each $r$, there exists a unitary operator $v^{(r)}$ on $\mathcal{H}_\A$ and $u^{(r)\dag}_U$ on $\mathcal{H}_\B$ such that
\begin{equation}
 (P^{(r)}_U \otimes \mathbb{I}) U = (P^{(r)}_U v^{(r)}_U) \otimes u^{(r)\dag}_U. \label{onU}
\end{equation}
Note that by multiplying the complex conjugate of this equation, we obtain
\begin{equation}
 P^{(r)}_U P^{(s)}_U \otimes \mathbb{I} = (P^{(r)}_U v_U^{(r)} v_U^{(s)\dag} P^{(s)}_U) \otimes (u_U^{(r)\dag} u_U^{(s)}). \label{supp}
\end{equation}
\end{lemm}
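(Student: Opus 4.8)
The plan is to use the single-measurement form~(\ref{one_index}) of the relocalizing protocol to pin down the action of the accumulated operator $M^{(r)}_U \otimes u^{(r)}_B$ on product inputs, and only afterwards to trade $M^{(r)}_U$ for its support projector $P^{(r)}_U$. As above I take Bob's piece to be relocalized, so the protocol is Alice's measurement $\{M^{(r)}_U\}$ followed by Bob's unitary correction $u^{(r)}_B$, and I will set $u^{(r)}_U := u^{(r)}_B$ at the end.

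First I would fix an arbitrary product input and write the relocalization condition in terms of the overall Kraus operators $(M^{(r)}_U \otimes u^{(r)}_B)U$:
\[
 \sum_r (M^{(r)}_U \otimes u^{(r)}_B)\, U (\ket{\psi_\A}\bra{\psi_\A} \otimes \ket{\psi_\B}\bra{\psi_\B})\, U^\dag (M^{(r)}_U \otimes u^{(r)}_B)^\dag = \tau \otimes \ket{\psi_\B}\bra{\psi_\B}.
\]
Here $\tau$ may depend on the chosen input, but this is immaterial. Since the left-hand side is a sum of positive terms equal to the right-hand side, the range of each summand lies in the range of $\tau \otimes \ket{\psi_\B}\bra{\psi_\B}$, namely $(\mathrm{supp}\,\tau)\otimes\mathrm{span}\{\ket{\psi_\B}\}$. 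Hence every vector $(M^{(r)}_U \otimes u^{(r)}_B)U(\ket{\psi_\A}\otimes\ket{\psi_\B})$ is a product whose Bob factor is proportional to $\ket{\psi_\B}$; as $u^{(r)}_B$ is unitary, this says precisely that $W^{(r)} := (M^{(r)}_U \otimes u^{(r)}_B)U$ maps every product state to a product state whose Bob factor equals the input Bob factor.

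I would then invoke the elementary fact that a linear operator on $\mathcal{H}_\A \otimes \mathcal{H}_\B$ sending $\ket{\psi_\A}\otimes\ket{\psi_\B}$ to $\ket{\chi}\otimes\ket{\psi_\B}$ for all product inputs must factor as $C_r \otimes \mathbb{I}$: fixing $\ket{\psi_\B}$ makes $\ket{\psi_\A}\mapsto\ket{\chi}$ a linear map $C_{\psi_\B}$, and comparing the images of two linearly independent Bob vectors against that of their sum forces $C_{\psi_\B}$ to be independent of $\ket{\psi_\B}$. Thus $W^{(r)} = C_r \otimes \mathbb{I}$, and peeling off the unitary $u^{(r)}_B$ yields $(M^{(r)}_U \otimes \mathbb{I})U = C_r \otimes u^{(r)\dag}_B$.

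The step I expect to be the main obstacle is converting this $M^{(r)}_U$-level identity into the projector form~(\ref{onU}). I would use the polar decomposition $M^{(r)}_U = W_r |M^{(r)}_U|$, whose partial isometry obeys $W_r^\dag W_r = P^{(r)}_U$, and left-multiply by $|M^{(r)}_U|^{-1} W_r^\dag \otimes \mathbb{I}$, with $|M^{(r)}_U|^{-1}$ the inverse on the support of $|M^{(r)}_U|$; since $|M^{(r)}_U|^{-1} W_r^\dag M^{(r)}_U = P^{(r)}_U$, this gives $(P^{(r)}_U \otimes \mathbb{I})U = D_r \otimes u^{(r)\dag}_B$ with $D_r := |M^{(r)}_U|^{-1} W_r^\dag C_r$. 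Feeding $UU^\dag = \mathbb{I}$ and the unitarity of $u^{(r)}_B$ into $[(P^{(r)}_U \otimes \mathbb{I})U][(P^{(r)}_U \otimes \mathbb{I})U]^\dag = P^{(r)}_U \otimes \mathbb{I}$ then forces $D_r D_r^\dag = P^{(r)}_U$, so $D_r$ is a partial isometry onto $\mathrm{range}(P^{(r)}_U)$ and extends to a unitary $v^{(r)}_U$ on $\mathcal{H}_\A$ with $D_r = P^{(r)}_U v^{(r)}_U$ (extend the isometry $D_r^\dag$ on $\mathrm{range}(P^{(r)}_U)$ to a unitary, the orthogonal complements having equal dimension). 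This is~(\ref{onU}) with $u^{(r)}_U := u^{(r)}_B$; equation~(\ref{supp}) then follows immediately by multiplying~(\ref{onU}) for index $r$ by the adjoint of~(\ref{onU}) for index $s$ and using $UU^\dag = \mathbb{I}$.
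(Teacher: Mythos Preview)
Your proof is correct and follows the same two-step skeleton as the paper: first establish that $(M^{(r)}_U\otimes u^{(r)}_B)U$ factors as $(\text{operator on }\mathcal{H}_\A)\otimes\mathbb{I}$, then pass from $M^{(r)}_U$ to its support projector $P^{(r)}_U$.

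Where you differ is mainly in execution. The paper simply asserts the per-outcome identity $(M_U^{(r)}\otimes u_U^{(r)})U(\ket{\psi_\A}\bra{\psi_\A}\otimes\ket{\psi_\B}\bra{\psi_\B})U^\dag(M_U^{(r)}\otimes u_U^{(r)})^\dag=\tau^{(r)}_U\otimes\ket{\psi_\B}\bra{\psi_\B}$ from determinism, whereas you derive it from the averaged relocalization condition via the range-inclusion argument for sums of positive operators; your version is more explicit. For the projector step, the paper first shows $M'^{(r)}_U=M^{(r)}_U v^{(r)}_U$ from $M^{(r)}_U M^{(r)\dag}_U=M'^{(r)}_U M'^{(r)\dag}_U$, then drops the unitary in the polar decomposition to assume $M^{(r)}_U=|M^{(r)}_U|$ and appeals to the spectral decomposition (``linear independence of $P^{(r),k}_U$'') to reach~(\ref{onU}). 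You instead left-multiply by the pseudo-inverse $|M^{(r)}_U|^{-1}W_r^\dag$ and use $UU^\dag=\mathbb{I}$ to show the resulting $D_r$ satisfies $D_rD_r^\dag=P^{(r)}_U$, hence is a partial isometry extending to a unitary. Your route avoids the somewhat informal ``linear independence'' step and makes the appearance of the unitary $v^{(r)}_U$ transparent; the paper's route has the advantage of identifying $v^{(r)}_U$ earlier, before passing to the projector.
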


\begin{proof}
 Because the one-way LOCC protocol deterministically restores Bob's piece of quantum information, for each $r$, it must hold that
\begin{eqnarray}
 (M_U^{(r)} \otimes u_U^{(r)}) U (\ket{\psi_\A}\bra{\psi_\A} \otimes \ket{\psi_\B}\bra{\psi_\B}) U^\dag (M_U^{(r)} \otimes u_U^{(r)})^\dag
 = \tau^{(r)}_U \otimes \ket{\psi_\B}\bra{\psi_\B}.
\end{eqnarray}
Here, $\ket{\psi_\A}$ and $\ket{\psi_\B}$ are arbitrary, which forces that 
\begin{equation}
(M_U^{(r)} \otimes u_U^{(r)}) U = M_U^{'(r)} \otimes \mathbb{I}, \label{tkf}
\end{equation}
where $M_U^{'(r)} \ket{\psi_\A}\bra{\psi_\A} M_U^{'(r)\dag} \propto \tau^{(r)}_U$.  Taking the complex conjugate of Equation (\ref{tkf}) and multiplying it from the right, we obtain $M^{(r)}_U M^{(r)\dag}_U = M^{'(r)}_U M^{'(r)\dag}_U$.  There must exist a unitary operator $v_U^{(r)}$ on Alice's qudit such that 
\begin{equation}
 M_U^{'(r)} = M_U^{(r)} v_U^{(r)}. \label{tkf2}
\end{equation}

The polar decomposition of matrices asserts that $M^{(r)}_U$ can be expressed as
\begin{equation}
 M^{(r)}_U = w_U^{(r)} |M_U^{(r)}|,
\end{equation}
where $w_U^{(r)}$ is a local unitary operator on Alice's qudit, which does not affect the one-piece relocalizability of Bob's piece of quantum information.  For simplicity of the argument, we drop this extra local unitary operator and assume that $M^{(r)}_U = |M_U^{(r)}|$ without loss of generality.

Under this assumption, $M^{(r)}_U$ are Hermitian, which allows us to consider their spectral decomposition,
\begin{equation}
 M^{(r)}_U = \sum_k m^{(r),k}_U P^{(r),k}_U,
\end{equation}
where $m^{(r),k}_U$ are the distinct eigenvalues of $M^{(r)}_U$ and $P^{(r),k}_U$ are the projection operators whose range is the eigenspace of the corresponding eigenvalue.  By the linear independence of $P^{(r),k}_U$, Equations (\ref{tkf}) and (\ref{tkf2}) imply that
\begin{equation}
 (P^{(r)}_U \otimes \mathbb{I}) U = (P^{(r)}_U v_U^{(r)}) \otimes u_U^{(r)\dag}.
\end{equation}
Taking the complex conjugate and multiplying it from right, we have
\begin{equation}
 P^{(r)}_U P^{(s)}_U \otimes \mathbb{I} = (P^{(r)}_U v_U^{(r)} v_U^{(s)\dag} P^{(s)}_U) \otimes (u_U^{(r)\dag} u_U^{(s)}).
\end{equation}
\end{proof}

Although we assume that $M^{(r)}_U$ are Hermitian, they do not need to be projection operators.  When they are mutually orthogonal projection operators, then the operators form a projective measurement.  To decide whether, LOCC one-piece relocalization is possible after a global unitary operation, we need to consider the measurement operators that are not projective as well.  Before treating this general case, we prove a lemma for the special case of LOCC one-piece relocalization, where Alice uses a projective measurement.
\begin{lemm} \label{projective} 
If the one-way LOCC for one-piece relocalization of a global unitary operation $U$ uses a projective measurement, then $U$ must be a local unitary equivalent of a controlled-unitary operation.
\end{lemm}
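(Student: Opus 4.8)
The plan is to start from the two identities of Lemma~\ref{lemm.supp}, specialised to a projective measurement, and to reorganise $U$ into the controlled-unitary form (\ref{c-u}) by absorbing the outcome-dependent Alice-side unitaries $v_U^{(r)}$ into a single global unitary $u_\A$. When Alice's measurement is projective the operators $M_U^{(r)}$ coincide with their support projections $P_U^{(r)}$, so completeness reads $\sum_r P_U^{(r)} = \mathbb{I}$ with the $P_U^{(r)}$ mutually orthogonal. Summing (\ref{onU}) over $r$ then gives $U = \sum_r (P_U^{(r)} v_U^{(r)}) \otimes u_U^{(r)\dag}$, so the sole obstruction to (\ref{c-u}) is that the Alice factor $v_U^{(r)}$ still depends on the outcome $r$.

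First I would read off from (\ref{supp}) the orthogonality of ranges: since the Bob factor $u_U^{(r)\dag} u_U^{(s)}$ is unitary, the off-diagonal ($r \neq s$) part of (\ref{supp}) forces $P_U^{(r)} v_U^{(r)} v_U^{(s)\dag} P_U^{(s)} = 0$. The key construction is then to conjugate each projector, setting $\tilde P^{(r)} := v_U^{(r)\dag} P_U^{(r)} v_U^{(r)}$; these are again projections, and the relation just obtained yields $\tilde P^{(r)} \tilde P^{(s)} = 0$ for $r \neq s$. Putting $u_\A := \sum_r v_U^{(r)} \tilde P^{(r)}$ and using $v_U^{(r)} \tilde P^{(r)} = P_U^{(r)} v_U^{(r)}$, a short computation gives $u_\A u_\A^\dag = \sum_r v_U^{(r)} \tilde P^{(r)} v_U^{(r)\dag} = \sum_r P_U^{(r)} = \mathbb{I}$, where the mutual orthogonality of the $\tilde P^{(r)}$ and the completeness of the $P_U^{(r)}$ are exactly what is used. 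In finite dimension this already makes $u_\A$ unitary on $\mathcal{H}_\A$; intuitively $u_\A$ acts as $v_U^{(r)}$ on the range of $\tilde P^{(r)}$ and carries it onto the range of $P_U^{(r)}$.

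Finally I would verify the factorisation directly: multiplying $U$ on the right by $u_\A^\dag \otimes \mathbb{I}$ and collapsing the double sum with $\tilde P^{(r)} \tilde P^{(s)} = \delta_{rs} \tilde P^{(r)}$ yields $U (u_\A^\dag \otimes \mathbb{I}) = \sum_r P_U^{(r)} \otimes u_U^{(r)\dag}$, which rearranges to $U = (\sum_r P_U^{(r)} \otimes u_U^{(r)\dag})(u_\A \otimes \mathbb{I})$ --- precisely (\ref{c-u}) with Bob's unitaries $v_\B^{(r)} = u_U^{(r)\dag}$. The step I expect to be the crux is the construction of $u_\A$ and the proof that it is unitary: the naive guess $\sum_r v_U^{(r)} P_U^{(r)}$ does not work, because the orthogonality supplied by (\ref{supp}) has its adjoints placed the ``wrong'' way to control that expression, and it is precisely the conjugation to $\tilde P^{(r)}$ that realigns the relation so that the orthogonality and completeness of the original $P_U^{(r)}$ can be invoked. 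The remaining manipulations are routine given Lemma~\ref{lemm.supp}.
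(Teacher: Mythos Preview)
Your proof is correct and follows essentially the same line as the paper's: both start from Lemma~\ref{lemm.supp}, use the orthogonality of the $P_U^{(r)}$ to extract $P_U^{(r)} v_U^{(r)} v_U^{(s)\dag} P_U^{(s)} = 0$ for $r\neq s$ from (\ref{supp}), and then replace the outcome-dependent $v_U^{(r)}$ by a single Alice-side unitary before summing (\ref{onU}) over $r$.

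The one substantive difference is that the paper simply \emph{asserts} the existence of a common unitary $v_U$ with $P_U^{(r)} v_U = P_U^{(r)} v_U^{(r)}$ for all $r$, whereas you explicitly construct it as $u_\A = \sum_r v_U^{(r)} \tilde P^{(r)}$ with $\tilde P^{(r)} = v_U^{(r)\dag} P_U^{(r)} v_U^{(r)}$ and verify unitarity from the mutual orthogonality of the $\tilde P^{(r)}$. Your $u_\A$ is exactly the paper's $v_U$ (since $P_U^{(r)} u_\A = \sum_s P_U^{(r)} P_U^{(s)} v_U^{(s)} = P_U^{(r)} v_U^{(r)}$), so the two arguments coincide; yours just fills in the step the paper leaves to the reader, and your remark about why the naive candidate $\sum_r v_U^{(r)} P_U^{(r)}$ fails is a nice clarification.
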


\begin{proof}
A projective measurement is described by a set of complete and mutually orthogonal projection operators $\{ P^{(r)} \}_r$, where $\sum_r P^{(r)} = \mathbb{I}$ and $P^{(r)} P^{(s)} = \delta_{r,s} \mathbb{I}$.  We invoke Lemma \ref{lemm.supp} and use Equation (\ref{supp}), which shows that $P^{(r)}_U v_U^{(r)} v_U^{(s)\dag} P^{(s)}_U = \delta_{r,s} \mathbb{I}$ for all $r,s$.  This condition guarantees that there exists a single unitary operator $v_U$ such that 
\begin{equation}
P^{(r)}_U v_U = P^{(r)}_U v^{(r)}_U. \label{r_free}
\end{equation}
On the other hand, $u_U^{(r)\dag}$ does not have any constraint.  We substitute Equation (\ref{r_free}) into Equation (\ref{onU}) and take the summation over $r$.  By the completeness of the projective operators under consideration, we see that the global unitary operation $U$ has the form
\begin{equation}
 U = \left( \sum_r P_U^{(r)} \otimes u_U^{(r)\dag} \right) \cdot v_U \otimes \mathbb{I},
\end{equation}
which satisfies the definition of a local unitary equivalent of a controlled-unitary operation.
\end{proof}

Now, we are ready to prove the forward implication of Theorem.
\begin{proof}[Proof. Forward implication of Theorem]
By Lemma \ref{toOneWay}, if there exists an LOCC protocol that relocalizes one of the two pieces of quantum information delocalized by the global unitary operation $U$, then there exists a one-way LOCC protocol that achieves the same task.  Although we only consider the case when Bob's piece of quantum information is relocalized, recall that a similar argument still holds for the case where Alice's piece of information is relocalized.    In the one-way LOCC protocol for one-piece relocalization, if the measurement operation used by Alice is projective, then the global unitary operation $U$ is a local unitary equivalent of a controlled-unitary operation by Lemma \ref{projective}.

For the case where Alice uses a non-projective measurement operation, we first remark that Lemma \ref{lemm.supp} still holds.  The difference from the case of a projective measurement is that there is at least one pair of $P^{(r)}_U$ and $P^{(s)}_U$ such that $P^{(r)}_U P^{(s)}_U \neq 0$.  This puts a constraint on $u_U^{(r)\dag} u_U^{(s)}$ in Equation (\ref{supp}), more specifically, it must be equal to $\mathbb{I}$ up to a global phase.  This assures that there exists a phase factor $e^{i \theta_{r,s}}$ such that $u_U^{(r)} = e^{i \theta_{r,s}} u_U^{(s)}$.

This constraint on the unitary operators $u_U^{(r)}$ on Bob's qudit leads to constraints on the operators on Alice's qudits, $P^{(r)}_U$, \ie
\begin{equation}
 P^{(r)}_U P^{(s)}_U = P^{(r)}_U v^{(r)}_U v^{(s)\dag}_U P^{(s)}_U e^{i \theta_{r,s}}.
\end{equation}
This equation guarantees that we can always find a single unitary operator $v_U$ common to both $r$ and $s$ such that $P^{(r)}_U v_U = P^{(r)}_U v_U^{(r)}$ and $P^{(s)}_U v_U = P^{(s)}_U v_U^{(r)}$.  Combining this observation with Equation (\ref{onU}), we see that
\begin{equation}
 (P_U^{(r)} \otimes \mathbb{I})U = (P_U^{(r)}v_U) \otimes u_U^{(r)\dag}
\end{equation}
 and
\begin{equation}
 (P_U^{(s)} \otimes \mathbb{I})U = (P_U^{(s)}v_U) \otimes u_U^{(s)\dag}.
\end{equation}

The range of $P_U^{(r)}$ is a subspace of $\mathcal{H}_\A$, which we denote by $\mathcal{S}_U^{(r)}$.  We consider the sum-space $\mathcal{S}_U^{(r,s)}$ of two subspaces $\mathcal{S}_U^{(r)}$ and $\mathcal{S}_U^{(s)}$ and the projection operator $P_U^{(r,s)}$ whose range is this sum-space.  Any sum-space contains the subspaces used to construct the sum-space, which implies that for any vector $\ket{\phi} \in \mathcal{S}_U^{(r,s)}$, it must be that
\begin{equation}
 (\bra{\phi} \otimes \mathbb{I}) U = \bra{\phi}v_U \otimes u_U^{(r)}.
\end{equation}
Because $P_U^{(r,s)}$ has $\mathcal{S}_U^{(r,s)}$ as its range, this equation is equivalent to
\begin{equation}
 (P_U^{(r,s)} \otimes \mathbb{I}) U = P_U^{(r,s)}v_U \otimes u_U^{(r)\dag}.
\end{equation}

From the original set of operators $\{ M^{(r)}_U \}_r$ describing the general (non-projective) measurement performed by Alice, we constructed a set of projection operators $\{ P^{(r)}_U \}_r$.  We replace the two nonorthogonal projection operators $P^{(r)}_U$ and $P^{(s)}_U$ by $P_U^{(r,s)}$, which results in another set of projection operators.  We repeat the same procedure until all the projection operators in the set are mutually orthogonal.  Note that the final set satisfies the completeness relation because the sum-space always includes all the subspaces used to construct the sum-space.  For the economy of notation, we denote this new set of projection operators by $\{ P_U^{(r)} \}_r$.  

Moreover, the construction of $\{ P_U^{(r)} \}_r$ guarantees that there exists a local unitary operator $v_U^{(r)}$ on $\mathcal{H}_\A$ and $u_U^{(r)\dag}$ on $\mathcal{H}_\B$ such that
\begin{equation}
 (P_U^{(r)} \otimes \mathbb{I}) U = P_U^{(r)}v_U^{(r)} \otimes u_U^{(r)\dag}.
\end{equation}
Hence, we see that any global unitary operation that is LOCC one-piece relocalizable using any generalized measurement is one-piece relocalizable using a projective measurement operation.  By Lemma \ref{projective}, we know that such global unitary operations must be local unitary equivalents of controlled-unitary operations.
\end{proof}
 
\section{Conclusion and discussion} \label{conc}
We have classified global unitary operations by analyzing the delocalization power on quantum information.  The classification of the global unitary operations has been based on whether a given global unitary operation allows for LOCC one-piece relocalization when the global unitary operation is used to delocalized two pieces of localized quantum information.  It is possible to interpret the delocalization of quantum information as the consequence of a non-local property that global unitary operations have.  Global unitary operations that are LOCC one-piece relocalizable have weaker a non-local property than those that cannot be.

Delocalization power on quantum information is not the only kind of non-local properties which derive from applying a global unitary operation.  One of the most well studied such non-local property is entanglement generation \cite{OSR}.  The strength of the non-local property in terms of entanglement generation is called entangling power, where it is defined as the maximum amount of entanglement that can be generated between a two-body system over all possible input states with no entanglement.

Delocalization power and entangling power of global unitary operations are unrelated each other, which can be seen by studying the following two examples.  The first example is a two-qubit unitary operation given by
\begin{equation}
 \exp \left( i \alpha \sum_{j=x,y,z} \sigma_\A^j \otimes \sigma_\B^j \right), \label{OSR4}
\end{equation}
where $\sigma_\A^j$ and $\sigma_\B^j$ are the Pauli operators on Alice's qubit and Bob's qubit, respectively.  This form of decomposition of two-qubit unitary operation appears in Ref.~\cite{KrausCirac}.  Unless $\alpha = 0$, the operator Schmidt rank of this unitary operation is 4~\cite{OSR}, but a local unitary equivalent of a controlled-unitary operation always has the operator Schmidt rank of 2.  No two global unitary operations can be the same if their operator Schmidt ranks are different.  Hence, the unitary operation of the form (\ref{OSR4}) is \textit{not} LOCC one-piece relocalizable.  Yet, as $\alpha$ tends to 0, this unitary operation approaches the two-qubit identity operation, which clearly does not generate any entanglement.  This is an example of a global unitary operation exhibiting weak non-local property in terms of entangling power but has a strong non-local property in terms the delocalization power on quantum information.

The second example to consider is a particular two-qubit controlled-unitary operation called CNOT operation, which is given by $\ket{0}\bra{0}_\A \otimes \mathbb{I}_\B + \ket{1}\bra{1}_\A \otimes \sigma_\B^{\mathrm{x}}$. Here, $\ket{0}$ and $\ket{1}$ are the orthonormal basis of a qubit.  This unitary operation can generate a maximally entangled two-qubit state, which has 1 ebit of entanglement.  Even though, the CNOT operation generates has far more entangling power than the first example, it still is LOCC one-piece relocalizable.  These two examples show that the aspect of the non-local properties that we studied in this paper is a novel feature of global unitary operations which cannot be analyzed by entangling power.

We also note that it is crucial that the global unitary operation acts on two pieces of quantum information.  Let us consider a particular two-qubit unitary operation given by $U = \ket{0}\bra{0} \otimes \ket{0}\bra{0} + \ket{0}\bra{1} \otimes \ket{1}\bra{0} + \ket{1}\bra{0} \otimes \ket{0}\bra{1} - \ket{1}\bra{1} \otimes \ket{1}\bra{1}$.  This unitary operation has been proven to be \textit{not} a local unitary equivalent of a controlled-unitary operation in Ref.~\cite{ADQC}.  If Alice's input state is fixed to $\ket{+} = (\ket{0} + \ket{1})/\sqrt{2}$, LOCC one-piece relocalization becomes possible.  The one-piece relocalizing protocol begins by Alice performing a projective measurement given by $\{ \ket{+}\bra{+}, \ket{-}\bra{-} \}$ (here, $\ket{-} = (\ket{0}-\ket{1})/\sqrt{2}$), followed by a local unitary operation $H = \ket{0}\bra{+} + \ket{1}\bra{-}$ or $\sigma_\B^{z} \cdot H$, if the outcome is $+$ or $-$, respectively.  Hence, the delocalization power of a global unitary operation depends on the number of pieces of quantum information as the input.

\section*{Acknowledgements}  The authors thank P. S. Turner for helpful discussions.   This work is supported by Special Coordination Funds for Promoting Science and Technology.

\bibliographystyle{eptcs} 

\end{document}